\newtheorem{lem}{Lemma}
\def\@#1{\pmb{#1}}
\def\b#1{\mathbb{#1}}
\def\bf#1{\mathbf{#1}}
\def\s#1{\mathsf{#1}}
\def\ca#1{\mathcal{#1}}
\title{Distributed On-Device LLM Inference With Over-the-Air Computation}
\author{Kai Zhang, Hengtao He, Shenghui Song, Jun Zhang, \emph{Fellow}, \emph{IEEE}, and Khaled B. Letaief, \emph{Fellow}, \emph{IEEE}\\
Department of Electronic and Computer Engineering, The Hong Kong University of Science and Technology.\\
Email: kzhangbn@connect.ust.hk, \{eehthe, eeshsong, eejzhang, eekhaled\}@ust.hk
}
\begin{document}


\maketitle
\begin{abstract}
	
Large language models (LLMs) have achieved remarkable success across various artificial intelligence tasks. However, their enormous sizes and computational demands pose significant challenges for the deployment on edge devices. To address this issue, we present a distributed on-device LLM inference framework based on tensor parallelism, which partitions neural network tensors (e.g., weight matrices) of LLMs among multiple edge devices for collaborative inference. Nevertheless, tensor parallelism involves frequent all-reduce operations to aggregate intermediate layer outputs across participating devices during inference, resulting in substantial communication overhead. To mitigate this bottleneck, we propose an over-the-air computation method that leverages the analog superposition property of wireless multiple-access channels to facilitate fast all-reduce operations. To minimize the average transmission mean-squared error, we investigate joint model assignment and transceiver optimization, which can be formulated as a mixed-timescale stochastic non-convex optimization problem. Then, we develop a mixed-timescale algorithm leveraging semidefinite relaxation and stochastic successive convex approximation methods. Comprehensive simulation results will show that the proposed approach significantly reduces inference latency while improving accuracy. This makes distributed on-device LLM inference practical for resource-constrained edge devices.

\end{abstract}

\begin{IEEEkeywords} 6G, distributed inference, large language models, over-the-air computation, tensor parallelism. \end{IEEEkeywords}

\begin{figure*}[t]
	\renewcommand\figurename{\small Fig.}
	\centering \setlength{\baselineskip}{2pt}
	\includegraphics[width = 1\textwidth,trim=0 197 0 113,clip]{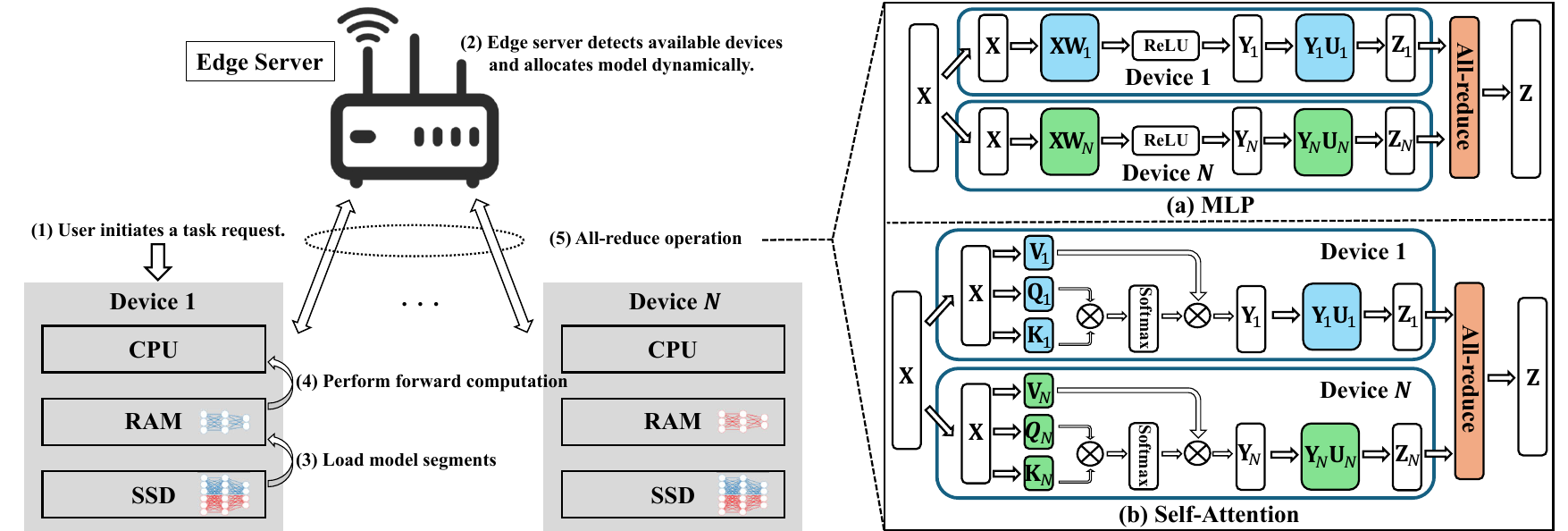}
	\vspace{-2pt}
	\caption{An illustration of the distributed on-device LLM inference system, showing the system workflow and visualizing tensor parallelism for (a) MLP and (b) self-attention layers.}\label{illustration}
	\vspace{-5pt}
\end{figure*}

\section{Introduction}

Large language models (LLMs) have achieved remarkable success in various fields of artifical intelligence (AI), such as natural language processing \cite{min2023recent} and embodied intelligence \cite{fan2024embodied}.
However, most existing LLMs rely on cloud-based infrastructure due to their enormous computational and memory requirements, which pose significant challenges for the deployment on edge devices. This cloud-based approach raises critical concerns regarding scalability and privacy, particularly in sensitive fields like healthcare and finance.
To address these limitations, distributed LLM inference has recently been proposed as a promising solution, which distributes inference workloads across multiple devices \cite{wu2023fast}. This strategy reduces the burden on the individual device and strengthens privacy protections.
Furthermore, the advanced communication capabilities of 5G and future 6G wireless networks have made distributed LLM inference increasingly promising for real-time applications \cite{letaief2019roadmap,letaief2021edge}.

The performance of distributed LLM inference systems is largely determined by communication overhead.
To enhance the communication efficiency of distributed LLM inference, several recent studies have been conducted \cite{zhang2024edgeshard, he2024large, chen2024adaptive, shao2021learning,li2024tackling}.
In \cite{zhang2024edgeshard}, the authors proposed a collaborative edge computing method for distributing different layers of LLMs across the edge device and cloud server, and developed a device selection and model partitioning algorithm to reduce inference latency and optimize the throughput.
In \cite{he2024large}, an active inference-based method was proposed to address the task offloading and resource allocation problem for LLM inference in cloud-edge computing frameworks.
Similarly, the authors of \cite{chen2024adaptive} proposed a reinforcement learning algorithm that optimizes the splitting point of LLMs between the edge device and cloud server to reduce the communication overhead.
Furthermore, task-oriented communications have been utilized to optimize end-to-end inference throughput, accuracy, and latency, which can further enhance the communication efficiency of distributed LLM inference \cite{shao2021learning,li2024tackling}.

Despite significant advances in distributed LLM inference, existing works \cite{zhang2024edgeshard, chen2024adaptive,he2024large, shao2021learning,li2024tackling} primarily focus on the device-cloud collaborative inference, which faces scalability challenges due to the reliance on the centralized cloud server. To address this limitation, distributed on-device LLM inference with tensor parallelism has recently been proposed \cite{shoeybi2019megatron}. This approach divides neural network tensors (e.g., weight matrices) of LLMs into smaller segments and distributes them across multiple edge devices. However, a critical challenge in tensor parallelism is the frequent all-reduce operations required to aggregate intermediate outputs across devices, which can cause substantial latency in practical wireless networks and hinder the real-time inference.

To solve the problem, we present a communication-efficient framework for distributed on-device LLM inference in this paper.
By exploiting the analog superposition property of wireless multiple-access channels, we propose an over-the-air computation method to facilitate fast all-reduce operations in tensor parallelism.
The performance of the distributed LLM inference is highly determined by the transmission error during the over-the-air computation.
To minimize the average transmission mean-squared error (MSE) with limited energy supply for edge devices, we investigate a joint model assignment and transceiver design problem, which can be formulated as a mixed-timescale stochastic non-convex optimization.
Specifically, the model assignment is determined at the beginning of inference based on long-term statistical channel state information (CSI), while the transceiver design adapts dynamically to the CSI.
To solve the problem, we develop a mixed-timescale algorithm leveraging semidefinite relaxation (SDR) and stochastic successive convex approximation (SCA) methods.
Extensive simulation results demonstrate that the proposed approach effectively reduces inference latency and improves inference accuracy.

\emph{Notations:} Column vectors and matrices are denoted by boldface lowercase and boldface capital letters, respectively. $\b{C}^{M\times N}$ represents the space of the $M \times N$ complex-valued matrices. $(\cdot)^\s{T}$ and $(\cdot)^\s{H}$ stand for the transpose and the conjugate transpose of their arguments, respectively. $\textup{tr}(\bf{A})$ denote the trace of matrix $\bf{A}$.
$\b{E}[\cdot]$ denotes the expectation operation. $\nabla$ represents the gradient operator.

\section{System Model and Problem Formulation}

In this section, we first elaborate on the distributed on-device LLM inference system, followed by the proposal of an over-the-air computation approach to accelerate all-reduce operations during inference. Subsequently, we formulate a joint model assignment and transceiver design problem to minimize the average transmission MSE.

\subsection{Proposed Distributed On-Device LLM Inference System}

To deploy LLMs on resource-limited edge devices, distributed on-device inference with tensor parallelism has been proposed.
This method involves partitioning neural network tensors (e.g., weight matrices) of LLMs into smaller segments and distributing them across multiple devices for simultaneous processing.
The complete workflow of the distributed on-device LLM inference system is illustrated in Fig. \ref{illustration}.
When a device initiates an inference request, the edge server dynamically identifies available local devices and partitions the model parameters. Then, each device loads its assigned model segment into memory and performs forward computation. After each layer of LLM is computed, an all-reduce operation aggregates the intermediate outputs from all devices, ensuring synchronization and consistency across devices during inference.

LLMs are primarily built on the Transformer architecture, which typically consists of dozens of Transformer layers. Each Transformer layer includes a self-attention mechanism and a multi-layer perceptron (MLP).
To achieve efficient distributed inference, tensor parallelism partitions both the self-attention layer and the MLP layer of each Transformer block into smaller tensor segments, as shown in Fig. \ref{illustration}.
For a typical 2-layer MLP within the Transformer block, the forward computation involves two main linear transformations, separated by a non-linear activation function (e.g., ReLU or GeLU). Mathematically, it can be expressed as follows,
\begin{equation}
	\begin{aligned}
		\bf{Z} = \max(\bf{0}, \bf{XW})\bf{U},
	\end{aligned}    
\end{equation}
where $\bf{X}$ is the input to the MLP layer, $\bf{Z}$ is the output, and $\bf{W}$ and $\bf{U}$ are the weight matrices, respectively.
The traditional centralized inference involves loading the entire weight matrices $\bf{W}$ and $\bf{U}$ into memory and performing full matrix multiplications on a single device, which is usually impractical for resource-limited edge devices.
To overcome this challenge, tensor parallelism distributes the weight matrices $\bf{W}$ and $\bf{U}$ across $N$ devices, as illustrated below,
\begin{equation}
	\begin{aligned}
		\bf{W} &= [\bf{W}_1, \ldots, \bf{W}_N],\\
		\bf{U} &= [\bf{U}_1^{\s{T}}, \ldots, \bf{U}_N^{\s{T}}]^{\s{T}},
	\end{aligned}    
\end{equation}
where $\bf{W}_n$ and $\bf{U}_n$ represent the portions of the weight matrices assigned to device $n$. Then, each device performs the forward computation on its respective segment. The input tensor $\bf{X}$ for the MLP layer is first broadcasted to all devices, and each device $n$ computes its local contribution as follows,
\begin{equation}
	\begin{aligned}
		\bf{Z}_n = \mathrm{max}(0, \bf{X} \bf{W}_n) \bf{U}_n,
	\end{aligned}    
\end{equation}
where $\bf{Z}_n$ is the partial output produced by device $n$. 
Once all devices obtain their local outputs $\bf{Z}_n$, an all-reduce operation is performed to aggregate the partial outputs from all devices as follows,
\begin{equation}\label{aggre_rule}
	\begin{aligned}
		\bf{Z} = \sum_{n=1}^N \bf{Z}_n.
	\end{aligned}    
\end{equation}
After aggregation, the final output $\bf{Z}$ of the MLP layer is broadcasted to all devices, ensuring synchronization and consistency across devices for the subsequent layer's computation.

For the self-attention layer, tensor parallelism similarly partitions the query ($\bf{Q}$), key ($\bf{K}$), and value ($\bf{V}$) matrices across devices.
Since the computation in the self-attention layer shares a similar matrix-multiplication structure to the MLP layer, the partitioning, local computation, and aggregation steps follow the same principles. A similar all-reduce operation is required to gather and combine the partial outputs from devices.

\subsection{Over-the-Air All-Reduce}
Employing tensor parallelism for distributed inference requires frequent all-reduce operations, which causes significant latency in practical wireless networks.
To address this issue, we propose an over-the-air computation approach to accelerate the all-reduce steps. Over-the-air computation aggregates distributed data efficiently by leveraging signal superposition in multiple-access channels, allowing simultaneous transmissions to compute nomographic functions (e.g., arithmetic mean) \cite{goldenbaum2013harnessing}. In the distributed LLM inference system, the aggregation of partial intermediate outputs across devices aligns with this operation, making over-the-air computation suitable to mitigate the communication overhead.

Specifically, we assume the edge server and each edge device are equipped with $N_r$ and $N_t$ antennas, respectively, constructing a MIMO multiple-access channel.
We consider the block-fading channel model, where channel statistics remain constant throughout the inference process, with channel states varying independently across different time intervals.
Let $\bf{s}_n=[s_{n,1}, \ldots, s_{n,L}]^{\s{T}}$ denote the per-round transmitted $L$ entries of device $n$'s intermediate output, where the complete intermediate output has a dimensionality of $L_0$.
Given synchronized symbol boundaries, all devices transmit their intermediate outputs simultaneously. 
To mitigate the distortion of received signals caused by channel noise at the server, aggregation beamforming is applied.
Let $\mathbf{A} \in \mathbb{C}^{N_r \times L}$ and $\mathbf{B}_n \in \mathbb{C}^{N_t \times L}$ denote the aggregation beamforming matrix at the edge server and the data precoding matrix at device $n$, respectively.
Then, the received signal at the server after the over-the-air computation is given by,
\begin{equation}
	\begin{aligned}
		&\hat{\bf{s}}  = \bf{A}^{\s{H}}\sum_{n=1}^{N}\mathbf{H}_n \mathbf{B}_n \bf{s}_n + \bf{A}^{\s{H}} \bf{n},
	\end{aligned}    
\end{equation}
where $\bf{H}_n\in \b{C}^{N_r\times N_t}$ denotes the uplink MIMO channel from device $n$ to the edge server, and $\bf{n}\sim \mathcal{CN}\left(0, \sigma_{z}^{2} \mathbf{I}\right)$ denotes the additive white Gaussian noise vector with $\sigma_{z}^{2}$ being the noise power.
The distortion of $\hat{\bf{s}}$ with respect to the desired target vector $\bf{s}  = \sum_{n=1}^{N} \bf{s}_n$ is measured by the MSE, defined as
%
%
%
%
%
%
%
\begin{equation}
	\begin{aligned}
		\textup{MSE}(\hat{\bf{s}},\bf{s}) = \mathbb{E}\left[(\hat{\bf{s}}-\bf{s})^{\s{T}}(\hat{\bf{s}}-\bf{s})\right].
	\end{aligned}    
\end{equation}
The MSE serves as a metric to evaluate the performance of the all-reduce operation with over-the-air computation. By substituting (5) into (6), the MSE can be explicitly represented as a function of transceiver beamforming matrices as follows,
\begin{equation}
	\begin{aligned}
		&\textup{MSE}(\mathbf{A},\left\lbrace \mathbf{B}_n  \right\rbrace ) \\
		&\!= \sum_{n=1}^{N} \textup{tr}\! \left( \! \left( \mathbf{A}^{\!\s{H}}  \mathbf{H}_n \mathbf{B}_n - \mathbf{I} \right) \! \left( \mathbf{A}^{\!\s{H}}  \mathbf{H}_n \mathbf{B}_n - \mathbf{I} \right)^{\s{H}} \right) \!+\! \sigma_z^2 \textup{tr}\left( \mathbf{A}^{\!\s{H}} \mathbf{A} \right).   \\
	\end{aligned}    
\end{equation}

Edge devices involved in inference tasks typically have limited energy supply. Thus, we assume that for each device $n$, the energy consumption for both the forward computation of each LLM layer and the transmission of the intermediate output cannot exceed the maximum power budget $P_{n}^{\textup{max}}$.
To model the computation energy consumption, we first introduce a model assignment vector $\bf{m}=[m_1,\ldots,m_N]$, which enables more flexible and efficient model distribution based on device capabilities (e.g., memory size and compute power). The entry $m_n \in [0,1]$ of $\bf{m}$ represents the proportion of the model allocated to device $n$. Consequently, the computation energy consumption for device $n$ is given by $e_n m_n s^{\textup{tot}}$, where $e_n$ denotes the device-specific energy coefficient that reflects the energy cost associated with accessing and processing each weight during computation, and $s^{\textup{tot}}$ is the number of parameters (weights) for each layer. The communication energy consumption of device $n$ can be derived as $\frac{L_0}{L} \textup{tr}\left( \mathbf{B}_n \mathbf{B}_n^{\s{H}} \right)$. Accordingly, the energy constraint is given by
\begin{equation}
	\begin{aligned}
		e_n m_n s^{\textup{tot}} + \frac{L_0}{L} \textup{tr}\left( \mathbf{B}_n \mathbf{B}_n^{\s{H}} \right) \leq P_{n}^{\textup{max}} , \forall n.
	\end{aligned}    
\end{equation}

\subsection{Problem Formulation}

In the proposed distributed LLM inference system, the overall performance depends on the model assignment policy $\bf{m}$ and the transceiver beamformers, $\mathbf{A}$ and $\left\lbrace \mathbf{B}_n \right\rbrace$. The transceiver optimization focuses on minimizing signal misalignment errors while suppressing noise, and thus it adapts dynamically to instantaneous CSI, making it a fast-timescale variable.
However, real-time adaptation of model assignment to instantaneous CSI is impractical due to the high latency caused by loading model segments. Thus, model assignment should be determined at the start of the inference process and should depend on long-term channel statistics, functioning as a slow-timescale decision. The resulting problem is a mixed-timescale joint optimization of short-term transceiver beamformers $\mathbf{A}$, $\left\lbrace \mathbf{B}_n \right\rbrace$ and long-term model assignment $\bf{m}$, with the goal of minimizing the average MSE as follows,
\begin{equation}
	\begin{aligned}
		\ca{P}:~\min_{\bf{m}} &~ \mathbb{E}_{\mathbf{H}}\left[ \min_{\mathbf{A},\left\lbrace \mathbf{B}_n  \right\rbrace} \textup{MSE}(\mathbf{A},\left\lbrace \mathbf{B}_n  \right\rbrace )    \right]  \\
		\textup{s.t.}& ~ e_n m_n s^{\textup{tot}} + \frac{L_0}{L} \textup{tr}\left( \mathbf{B}_n \mathbf{B}_n^{\s{H}} \right) \leq P_{n}^{\textup{max}}, \forall n,\\
		&~ \bf{m}^{\textsf{T}}\boldsymbol{1}=1, \bf{m}\geq 0,
	\end{aligned}    
\end{equation}
where the expectation $\mathbb{E}_{\mathbf{H}}\left[ \cdot \right] $ is taken over all random channel realizations $\mathbf{H} = \left\lbrace \bf{H}_n \right\rbrace_{n=1}^N $. 
Problem $\ca{P}$ is challenging to solve for three reasons: 1) the inherent non-convexity caused by the coupling between transceiver beamformers, 2) the presence of expectations over the random channel state in the objective function, and 3) the interdependence between the short-term beamforming matrices and the long-term model assignment policy within the per-device power constraints.
To address these issues, we develop an efficient mixed-timescale algorithm in the following section.

\section{Algorithm Development}
In this section, we develop a mixed-timescale algorithm to solve the joint model assignment and transceiver optimization problem $\ca{P}$.
We start by decomposing problem $\ca{P}$ into a family of short-term subproblems and a long-term subproblem as follows.

\subsubsection{Short-term transceiver optimization for given model assignment policy $\bf{m}$ and channel condition $\bf{H}$}
\begin{equation}
	\begin{aligned}
		\ca{P}_{s}: \min_{\mathbf{A},\left\lbrace \mathbf{B}_n  \right\rbrace}  &~ \textup{MSE}(\mathbf{A},\left\lbrace \mathbf{B}_n  \right\rbrace )   \\
		\textup{s.t.}~~& ~ e_n m_n s^{\textup{tot}} + \frac{L_0}{L} \textup{tr}\left( \mathbf{B}_n \mathbf{B}_n^{\s{H}} \right) \leq P_{n}^{\textup{max}}, \forall n.
	\end{aligned}    
\end{equation}

\subsubsection{Long-term model assignment optimization based on the optimal solution $\mathbf{A}^{*}(\bf{m}),\left\lbrace \mathbf{B}_n^{*}(\bf{m})  \right\rbrace$ to problem $\ca{P}_{s}$}

	\begin{align}
		\ca{P}_l:\min_{\bf{m}} &~ \mathbb{E}_{\mathbf{H}}\left[ \textup{MSE}(\mathbf{A}^{*}(\bf{m}),\left\lbrace \mathbf{B}_n^{*}(\bf{m})  \right\rbrace )    \right]  \nonumber  \\
		\textup{s.t.}& ~ e_n m_n s^{\textup{tot}} + \frac{L_0}{L} \textup{tr}\left( \mathbf{B}_n^{*}(\bf{m}) \mathbf{B}_n^{*}(\bf{m})^{\s{H}} \right) \leq P_{n}^{\textup{max}}, \forall n, \nonumber\\
		&~ \bf{m}^{\textsf{T}}\boldsymbol{1}=1, \bf{m}\geq 0.
	\end{align}    

The short-term problem $\ca{P}_s$ remains non-convex, which we address using the SDR technique. The long-term model assignment problem $\ca{P}_l$ is similarly challenging, as the optimal $\mathbf{A}^{*}(\bf{m}),\left\lbrace \mathbf{B}_n^{*}(\bf{m})  \right\rbrace$ cannot be derived in closed form. Additionally, the distribution of the channel state is difficult to obtain in practical wireless systems. To address these challenges, we propose a stochastic SCA algorithm that operates without requiring prior knowledge of the channel state distribution. In the following subsections, we provide a detailed implementation of the proposed algorithms.

\subsection{Short-Term Transceiver Optimization for $\ca{P}_s$}

We first simplify problem $\ca{P}_{s}$ by demonstrating that the zero-forcing (channel inversion) precoder is optimal conditioned on the aggregation beamformer.

\begin{lem}\label{lem_opt_precoder}
	For a given aggregation beamformer $\bf{A}$, the transmission MSE is minimized by using the zero-forcing precoders as follows,
	\begin{equation}\label{opt_precoder}
		\begin{aligned}
			\mathbf{B}_n^* =\left(  \mathbf{A}^{\!\s{H}}  \mathbf{H}_n\right)^{\!\s{H}} \left(  \mathbf{A}^{\!\s{H}}  \mathbf{H}_n \mathbf{H}_n^{\s{H}} \mathbf{A} \right)^{\! -1}, \forall n.
		\end{aligned}    
	\end{equation}
\end{lem}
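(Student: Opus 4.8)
The plan is to exploit the separable structure of the MSE objective in the precoders and then solve the resulting per-device least-squares problem in closed form. First I would fix $\mathbf{A}$ and introduce the shorthand $\mathbf{G}_n = \mathbf{A}^{\s{H}}\mathbf{H}_n \in \b{C}^{L\times N_t}$, so that the objective becomes $\sum_{n=1}^N \textup{tr}\big((\mathbf{G}_n\mathbf{B}_n - \mathbf{I})(\mathbf{G}_n\mathbf{B}_n - \mathbf{I})^{\s{H}}\big) + \sigma_z^2\,\textup{tr}(\mathbf{A}^{\s{H}}\mathbf{A})$. Since $\mathbf{A}$ is held fixed, the noise term $\sigma_z^2\,\textup{tr}(\mathbf{A}^{\s{H}}\mathbf{A})$ is a constant, and the remaining sum decouples across devices because the $n$-th summand depends only on $\mathbf{B}_n$. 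Hence minimizing the MSE reduces to independently minimizing each misalignment term $f_n(\mathbf{B}_n) = \textup{tr}(\mathbf{E}_n\mathbf{E}_n^{\s{H}})$ with $\mathbf{E}_n = \mathbf{G}_n\mathbf{B}_n - \mathbf{I}$.

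Next I would note that each $f_n$ is the trace of a positive semidefinite matrix $\mathbf{E}_n\mathbf{E}_n^{\s{H}}$ and is therefore bounded below by zero, with equality precisely when $\mathbf{E}_n = \mathbf{0}$, i.e.\ when $\mathbf{G}_n\mathbf{B}_n = \mathbf{I}$. The key step is to verify that the proposed precoder attains this lower bound: substituting $\mathbf{B}_n^* = \mathbf{G}_n^{\s{H}}(\mathbf{G}_n\mathbf{G}_n^{\s{H}})^{-1}$ yields $\mathbf{G}_n\mathbf{B}_n^* = \mathbf{G}_n\mathbf{G}_n^{\s{H}}(\mathbf{G}_n\mathbf{G}_n^{\s{H}})^{-1} = \mathbf{I}$, so the $n$-th misalignment term vanishes and the global minimum of $f_n$ is attained. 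Re-expanding $\mathbf{G}_n\mathbf{G}_n^{\s{H}} = \mathbf{A}^{\s{H}}\mathbf{H}_n\mathbf{H}_n^{\s{H}}\mathbf{A}$ and $\mathbf{G}_n^{\s{H}} = (\mathbf{A}^{\s{H}}\mathbf{H}_n)^{\s{H}}$ then recovers exactly the closed form in \eqref{opt_precoder}. For full rigor I would independently confirm global optimality by observing that $f_n$ is a convex quadratic in $\mathbf{B}_n$, so the stationarity condition $\mathbf{G}_n^{\s{H}}(\mathbf{G}_n\mathbf{B}_n - \mathbf{I}) = \mathbf{0}$ obtained from the Wirtinger derivative is both necessary and sufficient, and it is clearly met by $\mathbf{B}_n^*$.

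The main obstacle is justifying that the zero lower bound is actually attainable, which hinges on the invertibility of $\mathbf{G}_n\mathbf{G}_n^{\s{H}} = \mathbf{A}^{\s{H}}\mathbf{H}_n\mathbf{H}_n^{\s{H}}\mathbf{A}$ appearing in \eqref{opt_precoder}. This matrix is invertible exactly when $\mathbf{G}_n = \mathbf{A}^{\s{H}}\mathbf{H}_n$ has full row rank $L$, which requires $N_t \ge L$ together with a generic full-rank condition on $\mathbf{A}$ and $\mathbf{H}_n$; for MIMO channels with continuous fading this holds almost surely, and I would invoke it as the standing assumption under which the zero-forcing precoder is well defined. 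A secondary subtlety worth a remark is that the minimizer is not unique---every $\mathbf{B}_n$ satisfying $\mathbf{G}_n\mathbf{B}_n = \mathbf{I}$ attains the same zero misalignment---and the particular right-inverse in \eqref{opt_precoder} is singled out as the minimum-Frobenius-norm solution, which is the natural choice since it simultaneously minimizes the transmit power $\textup{tr}(\mathbf{B}_n\mathbf{B}_n^{\s{H}})$ that enters the per-device energy constraint.
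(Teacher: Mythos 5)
Your proof is correct and follows essentially the same standard argument that the paper defers to (it omits the proof entirely, citing Appendix A of the reference): for fixed $\mathbf{A}$ the MSE decouples across devices into nonnegative misalignment terms, each of which is driven to its zero lower bound by the right inverse $\mathbf{B}_n^* = (\mathbf{A}^{\s{H}}\mathbf{H}_n)^{\s{H}}(\mathbf{A}^{\s{H}}\mathbf{H}_n\mathbf{H}_n^{\s{H}}\mathbf{A})^{-1}$. Your two added remarks --- that well-posedness requires $\mathbf{A}^{\s{H}}\mathbf{H}_n$ to have full row rank (hence $L \le N_t$), and that among all zero-forcing solutions this particular right inverse is the minimum-Frobenius-norm one and therefore the right choice given the per-device power constraint --- are both accurate and are exactly the points the omitted proof needs to make the lemma airtight.
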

\begin{proof}
	{Lemma \ref{lem_opt_precoder} can be proved by following the same steps as in \cite[Appendix A]{li2019wirelessly}. The detailed proof is omitted due to space limitation.}
\end{proof}

Let $\bf{G}$ represent the normalized aggregation beamformer that satisfies $\textup{tr}(\bf{G}\bf{G}^{\s{H}})=1$, and consequently $\bf{A}=\sqrt{\alpha} \bf{G}$ with $\alpha$ denoting the norm of $\bf{A}$.
By substituting \eqref{opt_precoder}, problem $\ca{P}_{s}$ can be reformulated as follows,
\begin{equation}\label{p_s_1}
	\begin{aligned}
		\min_{\alpha, \bf{G}} &~~ \alpha\\
		\textup{s.t.} & ~~ e_n m_n s^{\textup{tot}} +  \frac{L_0}{\alpha L} \textup{tr}\left( \left( \bf{G}^{\s{H}} \bf{H}_n \bf{H}_n^{\s{H}} \bf{G} \right)^{-1}  \right) \leq P_{n}^{\textup{max}}, \forall n,\\
		& ~~\textup{tr}\left( \bf{G} \bf{G}^{\s{H}}  \right) = 1.
	\end{aligned}  
\end{equation}
Problem \eqref{p_s_1} remains challenging to solve due to its non-convex constraints involving the term $\textup{tr}( ( \bf{G}^{\s{H}} \bf{H}_n \bf{H}_n^{\s{H}} \bf{G} )^{-1}  )$. To address the problem, we develop a tractable approximation of the problem by employing the following inequality,

\begin{equation}\label{sdr_ineq}
	\begin{aligned}
			 \textup{tr}\left( \left( \bf{G}^{\s{H}} \bf{H}_n \bf{H}_n^{\s{H}} \bf{G} \right)^{-1}  \right) \leq \frac{L}{\lambda_{\min}\left( \bf{H}_n^{\s{H}} \bf{G} \bf{G}^{\s{H}} \bf{H}_n \right)},
		\end{aligned}  
\end{equation}
where the equality holds when the channel is well-conditioned, i.e., the singular values of $\bf{H}_n$ are identical. By utilizing \eqref{sdr_ineq}, we reformulate an approximated version of problem \eqref{p_s_1} as follows,
\begin{equation}\label{p_s_2}
	\begin{aligned}
		\min_{\alpha, \bf{G}} &~~ \alpha\\
		\textup{s.t.} & ~~ \frac{L_0}{\alpha \lambda_{\min}\left( \bf{H}_n^{\s{H}} \bf{G} \bf{G}^{\s{H}} \bf{H}_n \right)} \leq P_{n}^{\textup{max}}-e_n m_n s^{\textup{tot}}, \forall n,\\
		& ~~\textup{tr}\left( \bf{G} \bf{G}^{\s{H}}  \right) = 1.
	\end{aligned}  
\end{equation}
Then, by introducing a new variable ${\hat{\bf G}}=\bf{G} \bf{G}^{\s{H}}$, an equivalent formulation of problem \eqref{p_s_2} is obtained as follows,
\begin{equation}\label{p_s_3}
	\begin{aligned}
		\min_{\alpha, \hat{\bf{G}}} &~~ \alpha\\
		\textup{s.t.} & ~~ \frac{L_0}{\alpha \lambda_{\min}\left( \bf{H}_n^{\s{H}} \hat{\bf{G}} \bf{H}_n \right)} \leq P_{n}^{\textup{max}}-e_n m_n s^{\textup{tot}}, \forall n,\\
		& ~~\textup{tr}(\hat{\bf{G}} ) = 1, \textup{rank}(\hat{\bf{G}})=L, \hat{\bf{G}}\succeq 0.
	\end{aligned}  
\end{equation}
We observe that the only non-convex constraint in problem \eqref{p_s_3} is $\textup{rank}(\hat{\bf{G}})=L$. Therefore, we remove this constraint to obtain a relaxed version of problem \eqref{p_s_3} as follows,
\begin{equation}\label{p_s_4}
	\begin{aligned}
		\min_{\alpha, \hat{\bf{G}}} &~~ \alpha\\
		\textup{s.t.} & ~~ \frac{L_0}{\alpha \lambda_{\min}\left( \bf{H}_n^{\s{H}} \hat{\bf{G}} \bf{H}_n \right)} \leq P_{n}^{\textup{max}}-e_n m_n s^{\textup{tot}}, \forall n,\\
		& ~~\textup{tr}(\hat{\bf{G}} ) = 1, \hat{\bf{G}}\succeq 0.
	\end{aligned}  
  \end{equation}
Then, problem \eqref{p_s_4} can be proved to be a convex problem.
After solving problem \eqref{p_s_4} using a convex solver (e.g., the CVX toolbox in MATLAB) and obtaining the globally optimal solution $\hat{\bf{G}}^*$, we apply the Gaussian randomization algorithm \cite{luo2010semidefinite} to map the solution to a feasible, near-optimal solution for the original non-convex problem.

\addtolength{\topmargin}{0.01in}
\subsection{Long-Term Model Assignment Optimization for $\ca{P}_l$}

The long-term model assignment problem $\ca{P}_l$ is intractable since the optimal $\mathbf{A}^{*}(\bf{m})$ and $\left\lbrace \mathbf{B}_n^{*}(\bf{m})  \right\rbrace$ cannot be derived in closed form. Additionally, the channel state distribution is difficult to obtain in practical wireless systems. To address these challenges, we employ a stochastic SCA algorithm that solves problem $\ca{P}_l$ recursively without requiring prior knowledge of the channel state distribution.
For clearer algorithmic description, we first reformulate the long-term problem $\ca{P}_l$ into an equivalent form as follows,
\begin{equation}
	\vspace{-3pt}
	\begin{aligned}		
		\min_{\bf{m}} &~~ f_0(\bf{m})=\mathbb{E}_{\mathbf{H}}\left[ \textup{MSE}(\mathbf{A}^{*}(\bf{m}),\left\lbrace \mathbf{B}_n^{*}(\bf{m})  \right\rbrace )    \right]  \\
		\textup{s.t.}& ~~ f_1(\bf{m})= s^{\textup{tot}} \textup{diag}(\bf{e} \bf{m}^{\s{T}})  + \frac{L_0}{L} \bf{e}_c\left(\bf{m} \right) \leq \bf{p}^{\textup{max}},\\
		&~~ \bf{m}^{\textsf{T}}\boldsymbol{1}=1, \bf{m}\geq 0,
	\end{aligned}    
\end{equation}
where 
\vspace{-2pt}
$$\bf{e}_c\!\left(\bf{m} \right) \!=\! [\textup{tr}\!\left( \mathbf{B}_1^{*}(\bf{m}) (\mathbf{B}_1^{*}(\bf{m}))^{\s{H}} \right)\!, \ldots, \textup{tr}\!\left( \mathbf{B}_N^{*}(\bf{m}) (\mathbf{B}_N^{*}(\bf{m}))^{\s{H}} \right)]^{\s{T}}\!,$$
$\bf{p}^{\textup{max}}=[{P}_1^{\textup{max}}, \ldots, {P}_N^{\textup{max}}]^{\s{T}}$, and $\bf{e} = [e_1,\ldots,e_N]^{\s{T}}$.
The proposed stochastic SCA algorithm iteratively performs the following steps: First, quadratic surrogate functions are constructed to approximate the non-convex components of the problem.
Then, the resulting convex quadratic approximation is solved, and the long-term model assignment policy is updated based on the solution. The details of these two steps are explained as follows.

\subsubsection{Step 1}

In each iteration $\tau$, the edge server first generates a channel sample $\bf{H}^{\tau}$, and then calculates the short-term transceiver beamformers $\mathbf{A}^{*}(\bf{m}^\tau)$ and $\left\lbrace \mathbf{B}_n^{*}(\bf{m}^\tau)  \right\rbrace$ by solving the short-term problem $\ca{P}_s$.
To address the channel randomness in the objective function $f_0(\bf{m})$, we approximate the expected MSE by using the MSE computed from a specific channel realization. Specifically, in the $\tau$-th iteration, $f_0(\bf{m}^{\tau})$ is evaluated as $\textup{MSE}(\mathbf{A}^{*}(\bf{m}^{\tau}), \left\lbrace \mathbf{B}_n^{*}(\bf{m}^{\tau}) \right\rbrace)$ using the given channel sample $\bf{H}^{\tau}$.
Then, the recursive convex approximations of the original objective function $f_0(\bf{m})$ and the power constraint function $f_1(\bf{m})$ are defined as follows,
 \begin{equation}\label{surrogate_function}
	\begin{aligned}
		\hat{f}_i^{\tau}(\bf{m}) = f_i(\bf{m}^{\tau}) + (\bf{u}_i^{\tau})^{\!\s{T}}\left( \bf{m} \!-\! \bf{m}^{\tau} \right)  + \eta_i \left\|  \bf{m} - \bf{m}^{\tau} \right\|^2,  \\
		\forall i\in \{0,1 \},
	\end{aligned}    
\end{equation}
where $\eta_i$ is a constant that ensures convexity. Furthermore, $\bf{u}_i^{\tau}$ is an approximation of the gradient $\nabla f_i(\bf{m}^{\tau})$, which is updated recursively as follows,
\begin{equation}
	\begin{aligned}
		\bf{u}_i^{\tau} = (1 - \rho^\tau) \bf{u}_i^{\tau-1} + \rho^\tau \nabla_{\bf{m}} f_i(\bf{m}; \mathbf{A}^{*}(\bf{m}^\tau),\left\lbrace \mathbf{B}_n^{*}(\bf{m}^\tau)  \right\rbrace).
	\end{aligned}    
\end{equation}
The sequence $\rho^\tau$ is decreasing in $\tau$, satisfying $\lim_{\tau \rightarrow\infty}\rho^{\tau}=0$, $\sum_{\tau=0}^{\infty}\rho^{\tau}=\infty$, and $\sum_{\tau=0}^{\infty}(\rho^{\tau})^2<\infty$.

\subsubsection{Step 2}

\begin{figure*}[htbp]
	\centering
	\begin{subfigure}{0.328\textwidth}
		\centering
		\includegraphics[width=\linewidth]{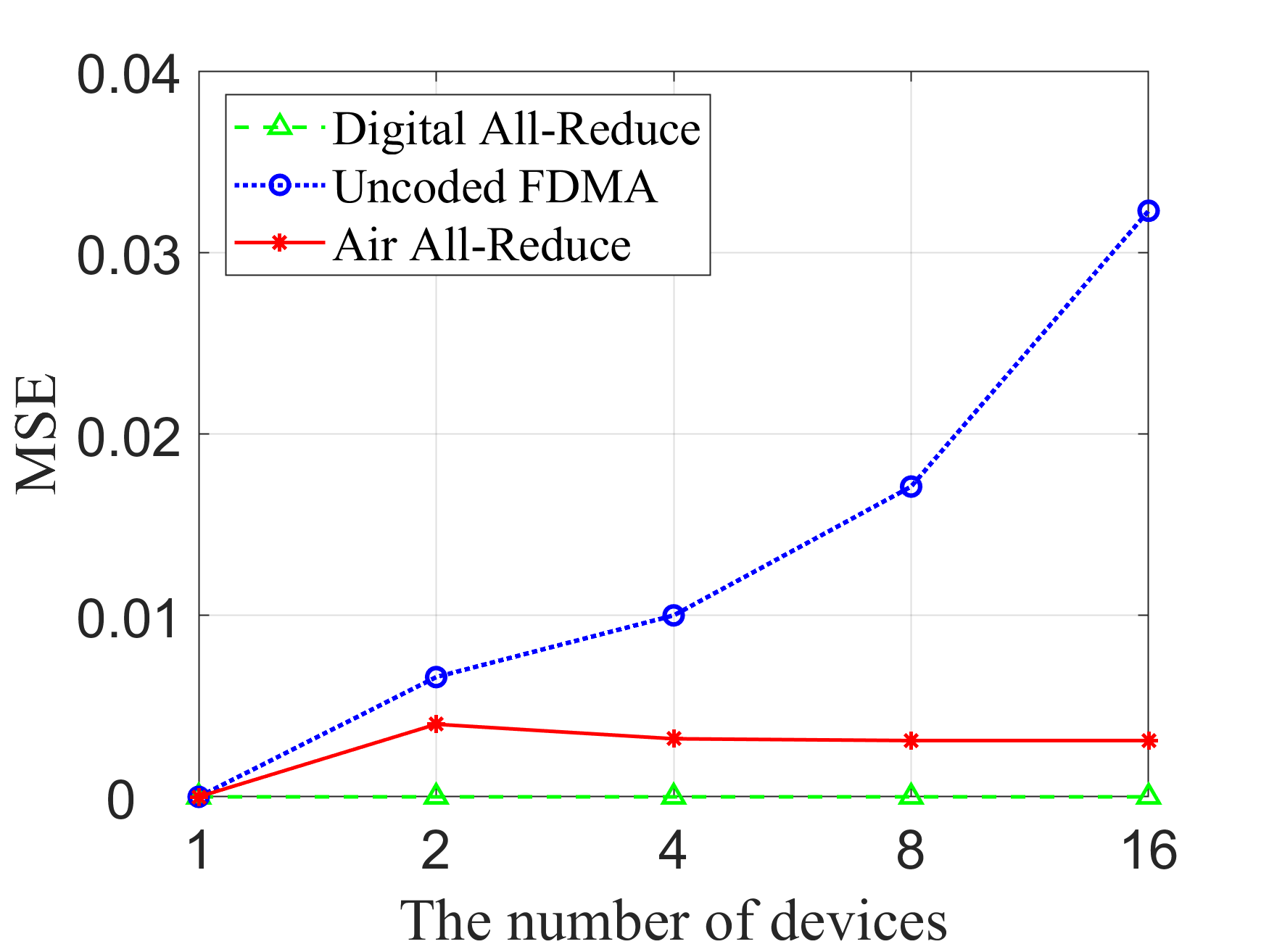}
		\label{fig:subfig1}
	\end{subfigure}
	\begin{subfigure}{0.328\textwidth}
		\centering
		\includegraphics[width=\linewidth]{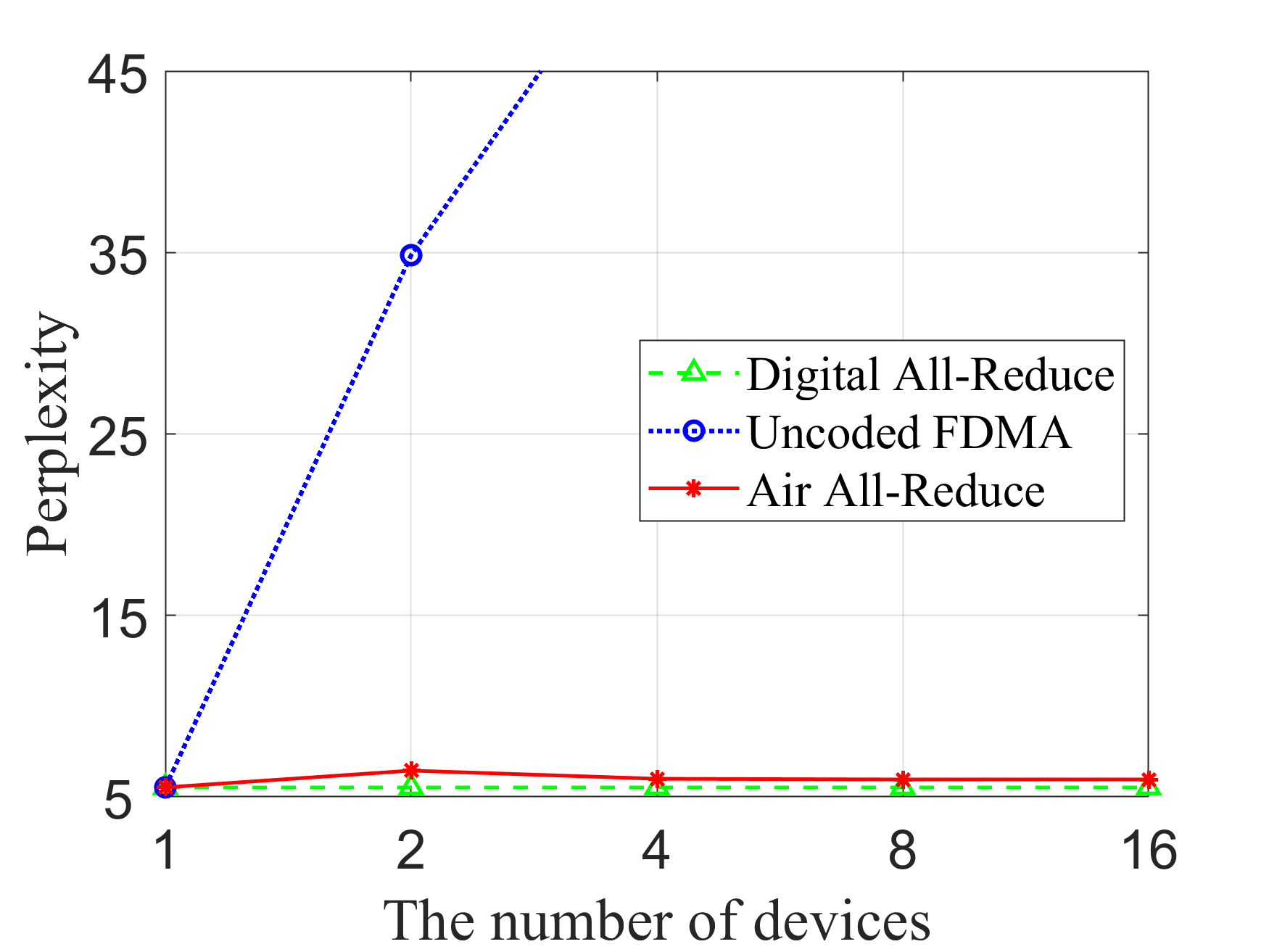}
		\label{fig:subfig2}
	\end{subfigure}
	\begin{subfigure}{0.328\textwidth}
		\centering
		\includegraphics[width=\linewidth]{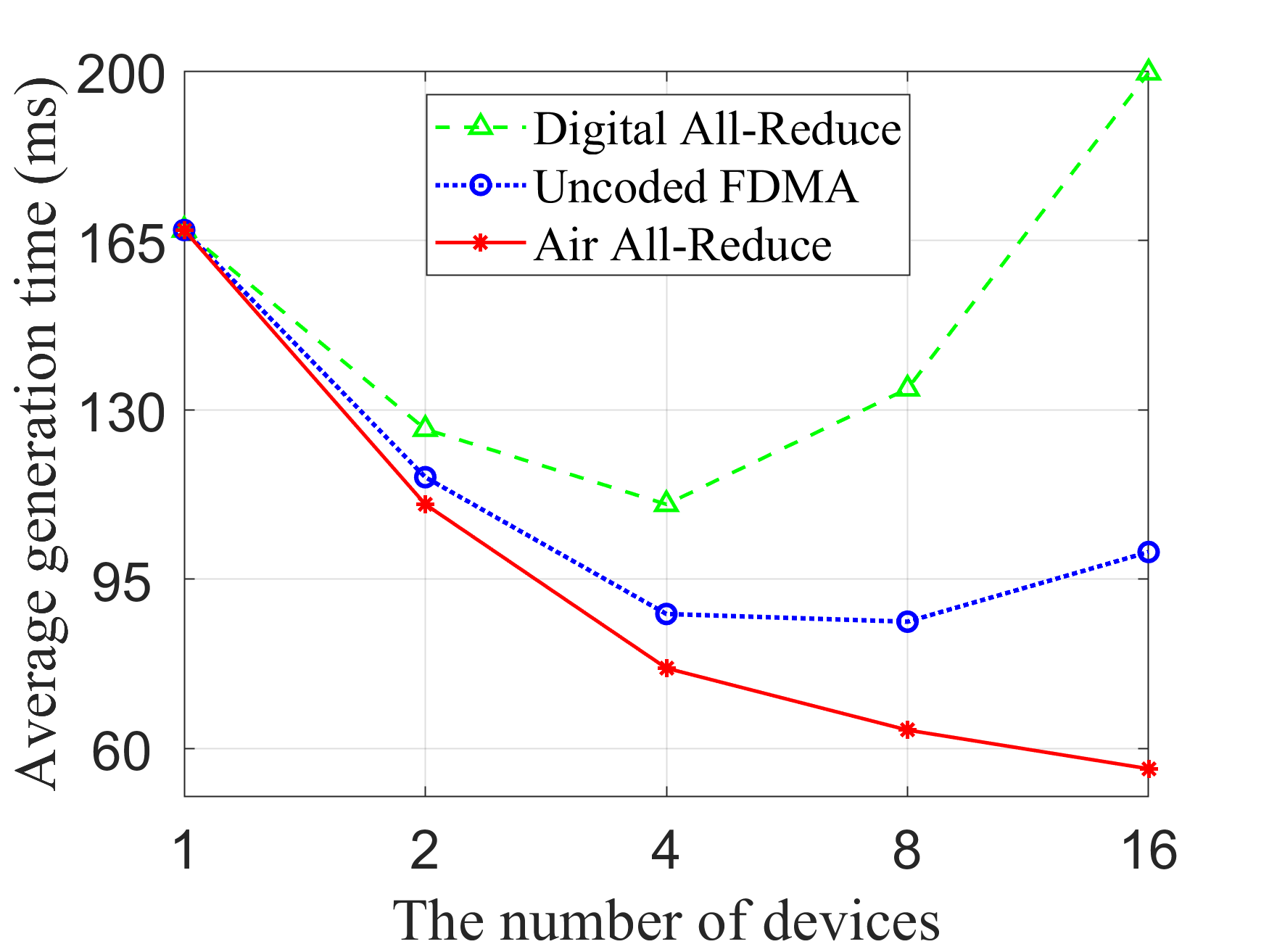}
		\label{fig:subfig3}
	\end{subfigure}
	\vspace{-24pt}
	\caption{The MSE, perplexity, and average generation time versus the number of edge devices}
	\label{fig:three_subfigs}
\end{figure*}

\setlength{\tabcolsep}{3.9pt}
\begin{table*}[t]
	\centering
	\small
	\begin{tabular}{lcccccccccccccccc}
		\toprule
		&  \multicolumn{16}{c}{\textbf{Average generation time per token (ms)}} \\
		\hhline{|~|-|-|-|-|-|-|-|-|-|-|-|-|-|-|-|-|}
		\multirow{1}{*}{\textbf{Model}}  & \multicolumn{4}{c}{\textbf{LLaMA2-7B}} & \multicolumn{4}{c}{\textbf{LLaMA2-13B}} & \multicolumn{4}{c}{\textbf{LLaMA2-70B}} & \multicolumn{4}{c}{\textbf{LLaMA3-70B}}\\
		\multirow{1}{*}{\textbf{Device Number}} &  \textbf{1} & \textbf{2} & \textbf{4} & \textbf{8} &  \textbf{1} & \textbf{2} & \textbf{4} & \textbf{8}& \textbf{1} & \textbf{2} & \textbf{4} & \textbf{8} &  \textbf{1} & \textbf{2} & \textbf{4} & \textbf{8} \\
		\midrule
		Digital All-Reduce & 114.2 & 85.2 & 79.5 & 108.3 & 217.3 & 174.0 & 176.6  &261.4 &\hspace{4.2pt}$\textup{N/A}^{*}$&807.3&729.7&981.6&N/A&893.2&783.8&1033.6\\
		Air All-Reduce & 114.2 & 69.7 & 45.7 & {\textbf{37.8}} & 217.3 & 128.5 & 81.3 & {\textbf{66.4}} & N/A & 660.9 &423.0& {\textbf{354.2}}&N/A&746.8&477.1& {\textbf{406.0}}\\
		\bottomrule
		\multicolumn{5}{c}{*: Not available due to insufficient memory.}&
	\end{tabular}
	\vspace{-3pt}
	\caption{Average generation time for different models across varying device numbers, with the shortest average generation time for each model being highlighted in bold.}
	\vspace{-2pt}
\end{table*}

\begin{algorithm}[tbp]
	\caption{Mixed-Timescale Model Assignment and Transceiver Optimization Algorithm} \label{algorithm1}
	\textbf{Initialize:} Model assignment policy $\bf{m}^{0}$ and iteration index $\tau=0$\;
	\textbf{\mbox{Step 1 (long-term model assignment optimization at the} beginning of inference task)} \\
	
	\mbox{Obtain a channel sample $\bf{H}^{\tau}$ and calculate the short-term} \mbox{transceiver beamformers $\!\mathbf{A}^{\!*}(\bf{m}^\tau),\left\lbrace \mathbf{B}_n^{*}(\bf{m}^\tau)  \right\rbrace$ by solving} the short-term problem $\ca{P}_s$\;
	
	Update the surrogate function $\hat{f}_i^{\tau}(\bf{m})$ according to \eqref{surrogate_function}\;
	
	\mbox{Solve problem \eqref{sca_problem} to obtain the optimal $\hat{\bf{m}}^{\tau}$ and update} $\bf{m}^{\tau}$ according to \eqref{z_update_rule}\;
	\mbox{Let $\tau = \tau \!+ \!1$ and return to Step 1. Repeat the above steps} until convergence\;
	
	\textbf{\mbox{Step 2 (short-term transceiver optimization at each all-} reduce step):} \\
	\mbox{Obtain the channel condition $\bf{H}$ and apply the short-term} \mbox{algorithm to solve the optimal transceiver beamformers} with the determined model assignment policy $\bf{m}$.
	
\end{algorithm}

After obtaining the convex approximation of the objective function and the constraint function, we formulate a convex approximation of the original problem as follows:
\begin{equation}\label{sca_problem}
	\begin{aligned}
		\hat{\bf{m}}^{\tau}=\min_{\bf{m}} &~~ \hat{f}_0^\tau(\bf{m}) \\
		\textup{s.t.}& ~~ \hat{f}_1^\tau(\bf{m}) \leq \bf{p}^{\textup{max}},\\
		&~~ \bf{m}^{\textsf{T}}\boldsymbol{1}=1, \bf{m}\geq 0.
	\end{aligned}    
\end{equation}
After solving for $\hat{\bf{m}}^{\tau}$, the model assignment policy is updated as follows,
\begin{equation}\label{z_update_rule}
	\begin{aligned}
		\bf{m}^{\tau+1} = (1 - \gamma^{\tau}) \bf{m}^{\tau} + \gamma^{\tau} \hat{\bf{m}}^{\tau},
	\end{aligned}    
\end{equation}
where $\gamma^{\tau} \in (0, 1)$ satisfies $\lim_{\tau \rightarrow\infty}\gamma^{\tau}=0$, $\sum_{\tau=0}^{\infty}\gamma^{\tau}=\infty$, and $\sum_{\tau=0}^{\infty}(\gamma^{\tau})^2<\infty$.

The above two steps iterate until convergence, and the overall process is outlined in Algorithm 1. Moreover, the convergence of the stochastic SCA algorithm has been rigorously analyzed in \cite{liu2018stochastic}.

\section{Simulation Results}


\subsection{Simulation Setups}

\subsubsection{LLM Inference Model Setting}

In the simulations, $N$ virtual machines (VMs) are set up on a single desktop, with each VM simulating a distinct edge device.
For evaluation, we utilize the LLaMA2 and LLaMA3 models, along with the WikiText-2 dataset.
The primary performance metric for inference accuracy is perplexity, a widely recognized measure of a LLM's capability to predict the next word in a sequence. It is defined mathematically as follows,
\vspace{-1.7pt}
\begin{equation} 
	\vspace{-1.7pt}
	\begin{aligned} 
		&\mathrm{Perplexity} =\exp \!\left(\!-\frac{1}{L_{\textup{txt}}} \sum_{k=1}^{L_{\textup{txt}}} \log \b{P}{}\left(w_k \!\mid\! w_1, \ldots, w_{k-1}\right)\right)\!,
		\end{aligned}
\end{equation}
where $\b{P}{}\left(w_k \mid w_1, \ldots, w_{k-1}\right)$ is the model's predicted probability for the next word, and $L_{\textup{txt}}$ is the text length.
Lower perplexity values indicate better inference performance, reflecting the model’s accuracy in generating subsequent tokens.

\subsubsection{Communication Model Setting}

The antenna number at the edge server is $N_r=20$, and each edge device has $N_t=4$ antennas. The bandwidth between the edge server and edge devices is $B = 10$ MHz.
The uplink channels are assumed to be independent and identically distributed (i.i.d.) Rician fading, modeled as i.i.d. complex Gaussian random variables with non-zero mean $\mu = 1$ and variance $\sigma^2 = 1$. Moreover, the maximum power budget is set as $P_{n}^{\textup{max}}= 1$ and the noise variance at the edge server is assumed to be 1.

\vspace{-2pt}
\subsection{Performance Evaluation}
\vspace{-2pt}

We compare the performance of the proposed air all-reduce approach with the following two benchmark schemes.

\begin{itemize}
	\item \textbf{Digital All-Reduce}: All devices upload intermediate layer outputs using a traditional broadband digital multi-access scheme, with each transmitted symbol quantized to $Q=8$ bits. To prevent multi-user interference, orthogonal frequency division multiple-access (OFDMA) is used, assigning each sub-channel to one device.
	\item \textbf{Uncoded FDMA}: This scheme similarly employs the OFDMA technique, with each device occupying a dedicated sub-channel to upload intermediate layer outputs in an uncoded analog manner.
	
\end{itemize}

In Fig. \ref{fig:three_subfigs}, we employ the LLaMA3 model with 8 billion parameters to compare the inference performance of different algorithms, encompassing three key performance metrics: transmission MSE, perplexity, and average generation time.
In Fig. \ref{fig:three_subfigs}(a), the proposed air all-reduce approach consistently achieves low MSE across all device counts, significantly outperforming the uncoded FDMA scheme, which exhibits a near-linear increase in MSE as the number of devices grows. The digital all-reduce method achieves near-zero MSE, but it will incur higher time costs (discussed later).
In Fig. \ref{fig:three_subfigs}(b), perplexity follows the same trend as the transmission MSE. The air all-reduce method maintains stable, low perplexity across all device configurations, while the perplexity of uncoded FDMA rises sharply with more devices. Digital all-reduce performs similarly to air all-reduce, maintaining low perplexity.
For the average generation time as shown in Fig. \ref{fig:three_subfigs}(c), air all-reduce offers the lowest latency among the three methods, particularly as the number of devices increases. Digital all-reduce has significant latency increases with more devices, while uncoded FDMA shows moderate latency improvements but still lags behind air all-reduce. Overall, air all-reduce strikes a balance between low latency and high inference accuracy, outperforming both benchmarks in practical wireless scenarios.

To further validate the effectiveness of the proposed algorithm, we conduct additional experiments using larger models, including LLaMA2 with 7, 13, and 70 billion parameters, and LLaMA3 with 70 billion parameters. In Table I, it can be observed that the proposed air all-reduce method consistently outperformed digital all-reduce in terms of inference speed. Across various device configurations, air all-reduce achieves up to 3x faster generation time, demonstrating its significant advantages for distributed LLM inference, especially with large-scale models.

\section{Conclusion}

In this paper, we investigated communication-efficient distributed on-device LLM inference over wireless networks.
We proposed an over-the-air computation approach to accelerate the frequent all-reduce operations required in the tensor parallelism-based distributed inference.
To minimize the average transmission MSE, we formulated a joint model assignment and transceiver design problem, which can be derived as a mixed-timescale stochastic non-convex optimization.
We developed a mixed-timescale algorithm by leveraging the SDR and stochastic SCA methods to address this problem.
Simulation results demonstrate that the proposed approach significantly reduces inference latency while improving accuracy, making distributed on-device LLM inference feasible for resource-constrained edge devices.


\bibliographystyle{ieeetr}
\bibliography{IEEEabrv,refs}

\begin{thebibliography}{10}

\bibitem{min2023recent}
B.~Min, H.~Ross, E.~Sulem, A.~P.~B. Veyseh, T.~H. Nguyen, O.~Sainz, E.~Agirre,
  I.~Heintz, and D.~Roth, ``Recent advances in natural language processing via
  large pre-trained language models: A survey,'' {\em ACM Comput. Surv.},
  vol.~56, no.~2, pp.~1--40, 2023.

\bibitem{fan2024embodied}
H.~Fan, X.~Liu, J.~Y.~H. Fuh, W.~F. Lu, and B.~Li, ``Embodied intelligence in
  manufacturing: leveraging large language models for autonomous industrial
  robotics,'' {\em J. Intell. Manuf.}, pp.~1--17, 2024.

\bibitem{wu2023fast}
B.~Wu, Y.~Zhong, Z.~Zhang, G.~Huang, X.~Liu, and X.~Jin, ``Fast distributed
  inference serving for large language models,'' {\em arXiv preprint
  arXiv:2305.05920}, 2023.

\bibitem{letaief2019roadmap}
K.~B. Letaief, W.~Chen, Y.~Shi, J.~Zhang, and Y.-J.~A. Zhang, ``The roadmap to
  {6G}: {AI} empowered wireless networks,'' {\em IEEE Commun. Mag.}, vol.~57,
  no.~8, pp.~84--90, 2019.

\bibitem{letaief2021edge}
K.~B. Letaief, Y.~Shi, J.~Lu, and J.~Lu, ``Edge artificial intelligence for
  {6G}: Vision, enabling technologies, and applications,'' {\em IEEE J. Sel.
  Areas Commun.}, vol.~40, no.~1, pp.~5--36, 2021.

\bibitem{zhang2024edgeshard}
M.~Zhang, J.~Cao, X.~Shen, and Z.~Cui, ``Edgeshard: Efficient {LLM} inference
  via collaborative edge computing,'' {\em arXiv preprint arXiv:2405.14371},
  2024.

\bibitem{he2024large}
Y.~He, J.~Fang, F.~R. Yu, and V.~C. Leung, ``Large language models inference
  offloading and resource allocation in cloud-edge computing: An active
  inference approach,'' {\em IEEE Trans. Mobile Comput.}, 2024.

\bibitem{chen2024adaptive}
Y.~Chen, R.~Li, X.~Yu, Z.~Zhao, and H.~Zhang, ``Adaptive layer splitting for
  wireless {LLM} inference in edge computing: A model-based reinforcement
  learning approach,'' {\em arXiv preprint arXiv:2406.02616}, 2024.

\bibitem{shao2021learning}
J.~Shao, Y.~Mao, and J.~Zhang, ``Learning task-oriented communication for edge
  inference: An information bottleneck approach,'' {\em IEEE J. Sel. Areas
  Commun.}, vol.~40, no.~1, pp.~197--211, 2021.

\bibitem{li2024tackling}
H.~Li, J.~Shao, H.~He, S.~Song, J.~Zhang, and K.~B. Letaief, ``Tackling
  distribution shifts in task-oriented communication with information
  bottleneck,'' {\em arXiv preprint arXiv:2405.09514}, 2024.

\bibitem{shoeybi2019megatron}
M.~Shoeybi, M.~Patwary, R.~Puri, P.~LeGresley, J.~Casper, and B.~Catanzaro,
  ``Megatron-lm: Training multi-billion parameter language models using model
  parallelism,'' {\em arXiv preprint arXiv:1909.08053}, 2019.

\bibitem{goldenbaum2013harnessing}
M.~Goldenbaum, H.~Boche, and S.~Sta{\'n}czak, ``Harnessing interference for
  analog function computation in wireless sensor networks,'' {\em IEEE Trans.
  Signal Process.}, vol.~61, no.~20, pp.~4893--4906, 2013.

\bibitem{li2019wirelessly}
X.~Li, G.~Zhu, Y.~Gong, and K.~Huang, ``Wirelessly powered data aggregation for
  iot via over-the-air function computation: Beamforming and power control,''
  {\em IEEE Trans. Wireless Commun.}, vol.~18, no.~7, pp.~3437--3452, 2019.

\bibitem{luo2010semidefinite}
Z.-Q. Luo, W.-K. Ma, A.~M.-C. So, Y.~Ye, and S.~Zhang, ``Semidefinite
  relaxation of quadratic optimization problems,'' {\em IEEE Signal Process.
  Mag.}, vol.~27, no.~3, pp.~20--34, 2010.

\bibitem{liu2018stochastic}
A.~Liu, V.~K. Lau, and M.-J. Zhao, ``Stochastic successive convex optimization
  for two-timescale hybrid precoding in massive mimo,'' {\em IEEE J. Sel.
  Topics Signal Process.}, vol.~12, no.~3, pp.~432--444, 2018.

\end{thebibliography}


\end{document}